\documentclass[10pt,english,onecolumn]{llncs}
  \author{Edoardo Persichetti}
  \institute{Florida Atlantic University \vspace{-0.6cm}}

\title{\bf Code-based Key Encapsulation\\ 
from McEliece's Cryptosystem}
\usepackage{babel}
\usepackage{amsfonts}
\usepackage{amsmath}
\usepackage{amssymb}
\usepackage{theorem}
\usepackage[latin1]{inputenc}
\usepackage[small,bf]{caption}
\usepackage{multirow}
\usepackage{mathtools}
\usepackage{enumerate}
\usepackage{array}
\usepackage{bigstrut}
\usepackage{exscale,relsize}
\usepackage{graphicx}

%

\newcommand{\4}{\noindent}
\newcommand{\5}{\bigskip}

\newcommand{\Ts}{\textsf}
\newcommand{\Gen}{\textsf{KeyGen}}
\newcommand{\Enc}{\textsf{Enc}}
\newcommand{\Dec}{\textsf{Dec}}

\newcommand{\rand}{\xleftarrow{\$}}
\newcommand{\pk}{\textsf{pk}}
\newcommand{\sk}{\textsf{sk}}

\newcommand{\pr}{\textsf{Pr}}
\newcommand{\Adv}{\textsf{Adv}}

\newcommand{\KDF}{\textsf{KDF}}
\newcommand{\KEM}{\textsf{KEM}}
\newcommand{\DEM}{\textsf{DEM}}
\newcommand{\HY}{\textsf{HY}}

  \newtheorem{assumption}{Assumption} 
\pagestyle{plain}

\begin{document}
\date{}
\maketitle

\begin{abstract}In this paper we show that it is possible to extend the framework of Persichetti's Nierreiter-based KEM~\cite{Edoardo-2013} and create a secure KEM based on the McEliece protocol. This provides greater flexibility in the application of coding theory as a basis for cryptographic purposes.
\end{abstract}

\section{Introduction}
A \emph{Hybrid Encryption} scheme is a cryptographic protocol that uses public-key encryption as means to securely exchange a key, while delegating the task of encrypting the body of the message to a symmetric scheme. The public-key component is known as \emph{Key Encapsulation Mechanism (KEM)}. The first code-based KEM, utilizing the Niederreiter framework~\cite{nied}, was presented by Persichetti in~\cite{Edoardo-2013} and successively implemented in \cite{mcbits}. In this paper, we expand on Persichetti's work and prove that if we use the McEliece approach~\cite{mceliece} we are still able to obtain a secure KEM. This is a novel construction, with a great potential impact, especially considering NIST's recent call for papers for secure post-quantum primitives~\cite{NIST}.
\section{Preliminaries}

\subsection{The McEliece Cryptosystem}

We consider here a more ``modern" version compared to R. J. McEliece's original cryptosystem~\cite{mceliece}. In the description that we use (Table~\ref{niedscheme}, Appendix~\ref{niedapp}), we consider families of codes to which is possible to associate an efficient decoding algorithm; we denote this with $\Ts{Decode}_\Delta$, where $\Delta$ is a description of the selected code that depends on the specific family considered. For instance, in the case of binary Goppa codes, the associated algorithm is Patterson's algorithm~\cite{patterson} and $\Delta$ is given by a Goppa polynomial $g(x)$ and its support $(\alpha_1,\dots,\alpha_n)$. For MDPC codes~\cite{MisoczkiTillichSendrierBarreto-2012}, decoding is given by Gallager's bit-flipping algorithm~\cite{gallager} and $\Delta$ is a sparse parity-check matrix $H$ for the code. Also, we denote with $\9W_{q,n,w}$ the set of words of $\9F_q^n$ with Hamming weight $w$.\5

The security of the scheme follows from the two following computational assumptions.

\begin{assumption}[Indistinguishability] \label{Nied-IND}The $k\times n$ matrix $G$ output by $\Gen$ is computationally indistinguishable from a same-size uniformly chosen matrix.
\end{assumption}

\begin{assumption}[Decoding Hardness] \label{dechard}
Let $G$ be a generator matrix for an $[n,k]$ linear code $\7C$ over $\9F_q$ and $y$ a word of $\9F_q^n$. It is hard to find a codeword $c\in\7C$ with $d(c,y)\leq w$.
\end{assumption}

%

\4Assumption~\ref{dechard} is also known as the General Decoding Problem (GDP), which was proved to be NP-complete in \cite{np}, and it is believed to be hard on average, and not just on the worst-case instances (see for example Sendrier~\cite{sendconj}).

\subsection{Encapsulation Mechanisms and the Hybrid Framework}
A Key Encapsulation Mechanism (KEM) is essentially a Public-Key Encryption scheme (PKE), with the exception that the encryption algorithm takes no input apart from the public key, and returns a pair $(K,\psi_0)$. The string $K$ has fixed length $\ell_K$, specified by the KEM, and $\psi_0$ is an ``encryption" of $K$ in the sense that $\Dec_\sk(\psi_0)=K$. The key $K$ produced by the KEM is then passed on to a Data Encapsulation Mechanism (DEM), which is in charge of encrypting the actual message. The formulation of a DEM, that normally comprises additional tools for security such as Message Authentication Codes (MAC), is outside the scope of this paper, and we refer the reader to~\cite{cs01} for more details.\5

A KEM is required to be \emph{sound} for at least all but a negligible portion of public key/private key pairs, that is, if $\Enc_\pk(\ )=(K,\psi_0)$ then $\Dec_\sk(\psi_0)=K$ with overwhelming probability.


%

\medskip

The security notions for a KEM are similar to the corresponding ones for PKE schemes. The one we are mainly interested in (representing the highest level of security) is IND-CCA, which we describe below.\5

\begin{definition}\label{cca2KEM}
The adaptive Chosen-Ciphertext Attack game for a KEM proceeds as follows:
\begin{enumerate}\addtolength{\itemsep}{0.1\baselineskip}
\item  Query a key generation oracle to obtain a public key $\pk$.
\item  Make a sequence of calls to a decryption oracle, submitting any string $\psi_0$ of the proper length. The oracle will respond with $\Dec^\KEM_{\sk}(\psi_0)$.
\item  Query an encryption oracle. The oracle runs $\Enc^\KEM_\pk$ to generate a pair $(\tilde{K},\tilde{\psi_0})$, then chooses a random $b\in\{0,1\}$ and replies with the ``challenge" ciphertext $(K^*,\tilde{\psi_0})$ where $K^*=\tilde{K}$ if $b=1$ or $K^*$ is a random string of length $\ell_K$ otherwise.
\item  Keep performing decryption queries. If the submitted ciphertext is $\psi^*_0$, the oracle will return $\bot$.
\item  Output $b^*\in\{0,1\}$.
\end{enumerate}
The adversary succeeds if $b^*=b$. More precisely, we define the \emph{advantage} of $\7A$ against KEM as
\begin{equation}
\Adv_\KEM(\7A,\lambda)=\Big|\pr[b^*=b]-\frac{1}{2}\Big|.
\end{equation}

\4We say that a KEM is secure if the advantage $\Adv_\KEM$ of any polynomial-time adversary $\7A$ in the above CCA attack model is negligible.
\end{definition}

It has then been proved that, given a CCA adversary $\7A$ for the hybrid scheme (HY), there exist an adversary $\7A_1$ for KEM and an adversary $\7A_2$ for DEM running in roughly the same time as $\7A$, such that for any choice of the security parameter $\lambda$ we have $\Adv_\HY(\7A,\lambda)\leq \Adv'_\KEM(\7A_1,\lambda)+\Adv_\DEM(\7A_2,\lambda)$. See Cramer and Shoup \cite[Th. 5]{cs01} for a complete proof.

%
%
%
%

%

\section{The New KEM Construction}\label{nkem}
The KEM we present here follows closely the McEliece framework, and is thus based on the hardness of GDP. Note that, compared to the original PKE, a slight modification is introduced in the decryption process. As we will see later, this is necessary for the proof of security. The ephemeral key $K$ is obtained via a Key Derivation Function $\KDF$ (see Appendix~\ref{appdef}).

\begin{table}[h!]\small
\caption{The McEliece KEM.}
\begin{tabular}{lp{11cm}}\label{NiedKEM}
\Ts {Setup}& Fix public system parameters $q,n,k,w\in \9N$, then choose a family $\7F$ of $w$-error-correcting $[n,k]$ linear codes over $\9F_q$.\\
\hline
$\Gen$ & Choose a code $\7C\in\7F$ with code description $\Delta$ and compute a generator matrix $G$. Generate a random $s\rand\9F_q^k$. Public key is $G$ and private key is $(\Delta,s)$.\\
 $\Enc$ & On input a public key $G$ choose random words $x\in\9F_q^k$ and $ e\in\9W_{q,n,w}$, then compute $K=\KDF(x|| e,\ell_K)$, $\psi_0=xG+e$ and return the key/ciphertext pair $(K,\psi_0)$.\\
 $\Dec$ & On input a private key $\Delta$ and a ciphertext $\psi_0$, compute $\Ts{Decode}_\Delta(\psi_0)$. If the decoding succeeds, use its output $(x,e)$ to compute $K=\KDF(x|| e,\ell_K)$. Otherwise, set $K=\KDF(s||\psi_0,\ell_K)$. Return $K$.\\
 \hline
\end{tabular}
\end{table}

\4If the ciphertext is correctly formed, decoding will always succeed, hence the KEM is perfectly sound. Furthermore, it is possible to show that, even if with this formulation $\Dec^\KEM$ never fails, there is no integrity loss in the hybrid encryption scheme thanks to the check given by the MAC.\\
\4We prove the security of the KEM in the following theorem.

\begin{theorem}\label{KEMsec}
Let $\7A$ be an adversary in the random oracle model for the Niederreiter KEM as in Definition~\ref{cca2KEM}. Let $\theta$ be the running time of $\7A$, $n_\KDF$ and $n_\Dec$ be two bounds on, respectively, the total number of random oracle queries and the total number of decryption queries performed by $\7A$, and set $N=q^k\cdot |\9W_{q,n,w}|$. Then there exists an adversary $\7A'$ for GDP such that $\Adv_\Ts{KEM}(\7A,\lambda)\leq\Adv_\Ts{GDP}(\7A',\lambda)+n_\Dec/N$. The running time of $\7A'$ will be approximately equal to $\theta$ plus the cost of $n_\KDF$ matrix-vector multiplications and some table lookups.
\end{theorem}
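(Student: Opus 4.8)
The plan is to proceed by a sequence of games, starting from the real IND-CCA game against the KEM and ending in a game where the adversary's view is information-theoretically independent of the bit $b$, so that its advantage is exactly $0$; along the way each hop will cost either nothing or a bounded statistical term, except for one reduction to GDP. First I would set up Game $0$ as the attack of Definition~\ref{cca2KEM}, and let $(x^*,e^*)$ be the plaintext underlying the challenge ciphertext $\psi_0^*=x^*G+e^*$ together with the value $\tilde K=\KDF(x^*\|e^*,\ell_K)$. The central idea is that, in the random oracle model, the adversary can only tell $\tilde K$ from a random string if at some point it queries $\KDF$ on the exact input $x^*\|e^*$. So in Game $1$ I would have the challenger abort (declaring failure) if $\7A$ ever makes the random-oracle query $x^*\|e^*$; the difference between Game $0$ and Game $1$ is bounded by the probability of this event, call it $\Pr[\mathsf{AskKey}]$, and in Game $1$ the challenge key $K^*$ is a uniformly random string regardless of $b$, so $\7A$'s advantage there is $0$. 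It remains to bound $\Pr[\mathsf{AskKey}]$.

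Next I would deal with the decryption oracle so that the reduction need not know the trapdoor $\Delta$. The point of the modified decryption rule — returning $\KDF(s\|\psi_0,\ell_K)$ on decoding failure rather than $\bot$ — is that decryption can be simulated using only the random oracle: on a query $\psi_0$, if the adversary has previously queried $\KDF$ on some $x\|e$ with $xG+e=\psi_0$ and $\wt(e)\le w$, return that recorded value; otherwise return a fresh random string (and record it, keyed by $\psi_0$, for consistency). This simulation is perfect unless the adversary submits a well-formed ciphertext $\psi_0=xG+e$ (with $\wt(e)\le w$) whose underlying pair $x\|e$ it has \emph{not} queried to $\KDF$: in that case the real oracle would have returned $\KDF(x\|e,\ell_K)$, a value the adversary cannot predict, while the simulator returns an independent random value. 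For any single such query, because $x\|e$ is uniform over a set of size $N=q^k\cdot|\9W_{q,n,w}|$ from the adversary's standpoint (it has no information about which $(x,e)$ hashes to a given $\psi_0$ beyond what it has queried), the simulation deviates with probability at most $1/N$; over $n_\Dec$ decryption queries this contributes the additive term $n_\Dec/N$. This is also why the post-challenge rule ``return $\bot$ on $\psi_0^*$'' is harmless — the adversary already knows $K^*$ is random there.

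Now for the main reduction: I would build $\7A'$ for GDP that takes as input a generator matrix $G$ (indistinguishable from the honest one by Assumption~\ref{Nied-IND}, so the GDP instance distribution matches the KEM key distribution) and a target word $y$, sets $\pk=G$, hands $\7A$ the challenge ciphertext $\psi_0^*=y$ together with a random $K^*$, answers $\KDF$-queries with fresh random strings (maintaining a table) and answers $\Dec$-queries by the simulation above. If $\7A$ triggers $\mathsf{AskKey}$, i.e.\ queries $\KDF$ on some $x\|e$, then $\7A'$ scans its query table for a pair with $xG+e=y$ and $\wt(e)\le w$; such a pair is exactly a solution to the GDP instance $(G,y,w)$, which $\7A'$ outputs. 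Hence $\Pr[\mathsf{AskKey}]\le\Adv_{\Ts{GDP}}(\7A',\lambda)$ (up to the indistinguishability term, which we fold in / treat as negligible per Assumption~\ref{Nied-IND}), and combining the hops gives $\Adv_{\Ts{KEM}}(\7A,\lambda)\le\Adv_{\Ts{GDP}}(\7A',\lambda)+n_\Dec/N$. The running-time claim follows since $\7A'$ does essentially what $\7A$ does, plus for each of the $n_\KDF$ hash queries one matrix-vector product $xG$ (to test $xG+e\stackrel{?}{=}\psi_0^*$ or $\stackrel{?}{=}$ a queried ciphertext) and a table lookup.

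The step I expect to be the main obstacle is the decryption-oracle simulation and its error analysis: one has to argue carefully that answering with independent random strings is indistinguishable from honest decoding-based decryption, pinning down exactly when the two diverge (well-formed ciphertext, underlying pair not yet hashed) and justifying the per-query bound $1/N$ — this is where the collision/guessing probability over $\9W_{q,n,w}$ and the role of the secret seed $s$ in the real scheme have to be handled cleanly. The $\KDF$-collision bookkeeping (ensuring the simulated $\KDF$ and $\Dec$ tables stay mutually consistent across the run) is the other fiddly point, though it is routine once set up.
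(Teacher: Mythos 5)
Your overall architecture is the same as the paper's: a game hop that aborts on the event that $\7A$ queries the random oracle at $x^*\|e^*$, a trapdoor-free simulation of decryption made possible by the ``never return $\bot$'' modification, and a reduction that plants the GDP target as the challenge ciphertext and harvests a solution from the hash-query table. The final bound and the running-time accounting also match. However, there is a genuine flaw in where you locate the $n_\Dec/N$ term. You charge $1/N$ per decryption query as a ``simulation deviation'' for well-formed ciphertexts whose underlying pair has not yet been hashed, arguing that $x\|e$ is ``uniform over a set of size $N$ from the adversary's standpoint.'' That reasoning does not hold up: for a fixed well-formed $\psi_0=xG+e$ the pair $(x,e)$ is uniquely determined (the map is injective since the code corrects $w$ errors), so there is nothing uniform over $N$ here; and more importantly, once you do the ``routine'' bookkeeping you defer to the end — on a later $\KDF$-query $(x,e)$, compute $y=xG+e$, look it up in the decryption table, and return the previously committed value — the simulated decryption oracle is \emph{perfectly} consistent with the real one on every non-challenge ciphertext, because in the real game $\Dec(\psi_0)=\KDF(x\|e)$ is itself a fresh uniform string the first time either oracle touches it. So that source contributes zero error, and your $n_\Dec/N$ is justified by an argument that is not correct.

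The term the paper actually needs $n_\Dec/N$ for is one your write-up does not isolate: the event that $\7A$ submits the challenge ciphertext $\psi_0^*$ to the decryption oracle \emph{before} the challenge phase. In that case the simulator has already committed $\Dec(\psi_0^*)$ to some random $K$, and at challenge time it hands out an independently chosen $K^*$; in the real game with $b=1$ these two values coincide, so the simulation genuinely breaks and cannot be patched (the reduction does not know $(x^*,e^*)$ and must keep $K^*$ independent). This single bad event is what is bounded by $n_\Dec/N$: the challenge $\psi_0^*=x^*G+e^*$ is uniform over the $N=q^k\cdot|\9W_{q,n,w}|$ valid ciphertexts and independent of the pre-challenge queries, so each of the at most $n_\Dec$ queries hits it with probability $1/N$. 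Your text only discusses post-challenge queries on $\psi_0^*$ (correctly noting they are answered $\bot$ by the game's rules). To repair the proof, replace your per-query deviation claim with this abort event as a separate game hop, and then argue the remaining simulation is flawless; everything else in your outline then goes through as in the paper.
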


\begin{proof}

We replace $\KDF$ with a random oracle $\7H$ mapping elements of the form $(x,e)\in\9F_q^k\times\9W_{q,n,w}$ to bit strings of length $\ell_K$. To prove our claim, we proceed as follows. Let's call $\Ts G_0$ the original attack game played by $\7A$, and $\Ts S_0$ the event that $\7A$ succeeds in game $\Ts G_0$. We define a new game $\Ts G_1$ which is identical to $\Ts G_0$ except that the game is halted if the challenge ciphertext $\psi^*_0=x^*G+e^*$ obtained when querying the encryption oracle had been previously submitted to the decryption oracle: we call this event $\Ts F_1$. Since the number of valid ciphertexts is $N$, we have $\pr[\Ts F_1]\leq n_\Dec/N$. It follows that $\Big|\pr[\Ts S_0]-\pr[\Ts S_1]\Big|\leq n_\Dec/N$, where $\Ts S_1$ is the event that $\7A$ succeeds in game $\Ts G_1$. Next, we define game $\Ts G_2$ which is identical to $\Ts G_1$ except that we generate the challenge ciphertext $\psi^*_0$ at the beginning of the game, and we halt if $\7A$ ever queries $\7H$ at $(x^* || e^*)$: we call this event $\Ts F_2$. By construction, since $\7H(x^* || e^*)$ is undefined, it is not possible to tell whether $K^*=K$, thus we have $\pr[\Ts S_2]=1/2$, where $\Ts S_2$ is the event that $\7A$ succeeds in game $\Ts G_2$. We obtain that $\Big|\pr[\Ts S_1]-\pr[\Ts S_2]\Big|\leq \pr[\Ts F_2]$ and we just need to bound $\pr[\Ts F_2]$.\\
We now construct an adversary $\7A'$ against GDP. $\7A'$ interacts with $\7A$ and is able to simulate the random oracle and the decryption oracle with the help of two tables $\Ts T_1$ and $\Ts T_2$, initially empty, as described below.\5

\4{\bf Key Generation}: On input the instance $(G, y^*,w)$ of GDP, return the public key $\pk=G$.\5

\4{\bf Challenge queries}: When $\7A$ asks for the challenge ciphertext:
\begin{enumerate}\addtolength{\itemsep}{0.4\baselineskip}
\item Generate a random string $K^*$ of length $\ell_K$.
\item Set $\psi^*_0= y^*$.
\item Return the pair $(K^*,\psi^*_0)$.
\end{enumerate}

\4{\bf Random oracle queries}: Upon $\7A$'s random oracle query $(x,e)\in\9F_q^k\times\9W_{q,n,w}$:
\begin{enumerate}\addtolength{\itemsep}{0.3\baselineskip}
\item Look up $(x,e)$ in $\Ts T_1$. If $(x,e,y,K)$ is in $\Ts T_1$ for some $y$ and $K$, return $K$.
\item Compute $y=xG+e$.
\item If $y=y^*$ then $\7A'$ outputs $c=xG$ and the game ends.
\item Look up $y$ in $\Ts T_2$. If $(y,K)$ is in $\Ts T_2$ for some $K$ (i.e. the decryption oracle has been evaluated at $y$), return $K$.
\item Set $K$ to be a random string of length $\ell_K$ and place $(x,e,y,K)$ in table $\Ts T_1$.
\item Return $K$.
\end{enumerate}

\4{\bf Decryption queries}: Upon $\7A$'s decryption query $y\in\9F_q^{n}$: 
\begin{enumerate}\addtolength{\itemsep}{0.3\baselineskip}
\item Look up $y$ in $\Ts T_2$. If $(y,K)$ is in $\Ts T_2$ for some $K$, return $K$.
\item Look up $y$ in $\Ts T_1$. If $(x,e,y,K)$ is in $\Ts T_1$ for some $x,e$ and $K$ (i.e. the random oracle has been evaluated at $(x,e)$ such that $y=xG+e$), return $K$.
\item Generate a random string $K$ of length $\ell_K$ and place the pair $(y,K)$ in $\Ts T_2$.
\item Return $K$.
\end{enumerate}

\4Note that, in both random oracle and decryption queries, we added the initial steps to guarantee the integrity of the simulation, that is, if the same value is queried more than once, the same output is returned. A fundamental issue is that it is impossible for the simulator to determine if a word is decodable or not.  If the decryption algorithm returned $\bot$ if and only if a word was not decodable, then it would be impossible to simulate decryption properly.  We have resolved this problem by insisting that the KEM decryption algorithm always outputs a hash value. With this formulation, the simulation is flawless and $\7A'$ outputs a solution to the GDP instance with probability equal to $\pr[\Ts F_2]$.\qed
\end{proof}
\vspace{-0.5cm}

\section{Conclusions}\label{kemconcl}

In this paper, we have introduced a key encapsulation method based on the McEliece cryptosystem. This novel approach enjoys a simple construction and a tight security proof as for the case of the Niederreiter KEM presented in ~\cite{Edoardo-2013}. We believe that our new construction will offer an important alternative while designing quantum-secure cryptographic primitives.

\bibliographystyle{plain-perso}
\bibliography{biblio2}

 \begin{appendix}
 \section{The McEliece Cryptosystem}\label{niedapp}

\begin{table}[h!]
\caption{The McEliece cryptosystem.}
\begin{tabular}{lp{11cm}}\label{niedscheme}
\Ts {Setup}& Fix public system parameters $q,n,k,w\in \9N$, then choose a family $\7F$ of $w$-error-correcting $[n,k]$ linear codes over $\9F_q$.\\
\hline
\multirow{2}{*}{$\Ts K$}  & $\Ts K_\Ts {publ}$ the set of $k\times n$ matrices over $\9F_q$.\\
& $\Ts K_\Ts {priv}$  the set of code descriptions for $\7F$.\\
$\Ts P$ & The vector space $\9F_q^{k}$.\\
$\Ts C$ & The vector space $\9F_q^n$.\\
$\Gen$ & Generate at random a code $\7C\in\7F$ given by its code description $\Delta$ and compute a public\footnotemark\ generator matrix $G$. Publish the public key $G\in \Ts K_\Ts {publ}$ and store the private key $\Delta\in \Ts K_\Ts {priv}$.\\
$\Enc$ & On input a public key $G\in\Ts K_\Ts {publ}$ and a plaintext $\phi=x\in\Ts P$, choose a random error vector $e\in\9W_{q,n,w}$, then compute $y=xG+e$ and return the ciphertext $\psi=y \in \Ts C$.\\
$\Dec$ & On input the private key $\Delta\in\Ts K_\Ts {priv}$ and a ciphertext $\psi\in\Ts C$, compute $\Ts{Decode}_\Delta(\psi)$. If the decoding succeeds, return its output $\phi=x$. Otherwise, output $\bot$.\\
\hline
\end{tabular}
\end{table}

\footnotetext{While the original version proposes to use scrambling matrices $S$ and $P$ (see~\cite{nied} for details), this is not necessary and alternative methods can be used, depending on the chosen code family.}


\section{Other Cryptographic Tools}\label{appdef}

In this section we introduce another cryptographic tool that we need for our construction.

\begin{definition}
A \emph{Key Derivation Function (KDF)} is a function that takes as input a string $ x$ of arbitrary length and an integer $\ell\geq0$ and outputs a bit string of length $\ell$. 
\end{definition}

\4A KDF is modelled as a random oracle, and it satisfies the \emph{entropy smoothing} property, that is, if $ x$ is chosen at random from a high entropy distribution, the output of KDF should be computationally indistinguishable from a random length-$\ell$ bit string.\5

Intuitively, a good choice for a KDF could be a hash function with a variable (arbitrary) length output, such as the new SHA-3, Keccak~\cite{keccak}.

 \end{appendix}

\end{document}